\documentclass[aip]{revtex4-1}
\usepackage{graphicx}
\usepackage{dcolumn}
\usepackage{bm}
\usepackage{color}
\usepackage{amsmath}
\usepackage{enumerate}
\usepackage{amssymb}
\usepackage{stmaryrd}
\usepackage{epsfig}
\usepackage{hyperref}
\usepackage{comment}
\usepackage{amsthm}


 
 \newtheorem{lem}{Lemma}
 \newtheorem{prop}{Proposition}

\def\rd{\mathrm{d}}


\begin{document}


\title{Conservation laws of the generalized Riemann equations at $N=2,3,4$}
\author{Binfang Gao}
\email{ycugss2008@163.com}
\author{Kai Tian}
\email{tiankai@lsec.cc.ac.cn}
\author{Q. P. Liu}
\email[Author to whom correspondence should be addressed. Electronic mail: ]{qpl@cumtb.edu.cn}
\author{Lujuan Feng}
\affiliation{Department of Mathematics,
China University of Mining and Technology,
Beijing 100083, People's Republic of China}
\date{\today}

\begin{abstract}
In this paper, we present infinitely many conserved densities satisfying particular conservation law $F_{t}=(2uF)_{x}$  for the generalized Riemann equations at $N=2,3,4$. In the $N=2$ case, we also construct conserved densities corresponding to new conservation laws  containing an arbitrary smooth function. In virtue of reductions and/or changes of variables, related conserved densities are obtained for two component Hunter-Saxton equation, Hunter-Saxton equation, Gurevich-Zybin equation and Monge-Ampere equation.
\end{abstract}

\pacs{}
\keywords{generalized Riemann equations, reciprocal transformation, conserved density, reduction}

\maketitle 

\section{Introduction}
The Riemann equation
\begin{equation}\label{riemann}
u_t + uu_x =0 ,
\end{equation}
where $u=u(x,t)$ and subscripts denote specified partial derivatives, also known as inviscid Burgers equation or Hopf equation in literatures, has been extensively studied as a prototype for many phenomena related to hyperbolic systems. \cite{whitham} Due to an interesting observation of Holm and Pavlov, the Riemann equation \eqref{riemann} was recently generalized to \cite{PopPry,GPPP,PAPP}
\begin{equation}\label{hprie}
  \left(\partial_t + u\partial_x\right)^Nu=0, \quad (N\in \mathbb{Z}_{+})
\end{equation}
which could be reformulated as a $N$-component system, namely
\begin{equation}\label{griemann}
u_{j,t} = (1-\delta_{jN})u_{j+1} - u_1 u_{j,x}, \quad (j=1,2,\cdots,N)
\end{equation}
where $u_1 = u$, $u_{j+1}=(\partial_t + u\partial_x)u_j(1\leq j<N)$ and $\delta_{jN}$ is the Kronecker delta. The equation \eqref{hprie}, or equivalently the system \eqref{griemann}, serves as a multi-component generalization of the Riemann equation \eqref{riemann} and will be referred to as the generalized Riemann equation, which has been shown to be closely related to some physically important models. In fact, the generalized Riemann equation at $N=2$ could be viewed as a regular Whitham type system, \cite{GPPP} and reduced to the Gurevich-Zybin system \cite{Pavlov} or an equation describing non-local gas dynamics \cite{brudas} as well. Through a Miura-type transformation, a $N$-component extension of the Hunter-Saxton equation \cite{HS,HZ} was deduced from the system \eqref{griemann}. \cite{Popo}

The generalized Riemann equation \eqref{griemann} is integrable, as certified by a universal matrix Lax representation at arbitrary $N$. \cite{Popo} Furthermore, the bi-Hamiltonian structures for the generalized Riemann equations at $N=2$, $3$ and $4$ were constructed. \cite{PopPry,PAPP} An extraordinary feature of the generalized Riemann equation \eqref{griemann} is attributed to abundant conserved densities of various types, which were constructed by different approaches, for instance, the recursion operator \cite{brudas,Pavlov}, an iterative procedure \cite{PopPry} or the Lax representation. \cite{Popo} It was already pointed out by Popowicz \cite{Popo} that most conserved densities surprisingly satisfy a common conservation law, namely
\begin{equation}\label{gcl}
\partial_{t} (F)=\partial_{x}(-u_1F),
\end{equation}
where $F$ is a certain smooth function of $u_j(1\leq j\leq N)$ and their derivatives with respect to $x$. However, to the best of our knowledge there is still a lack of explanations to this richness of conserved densities.

Some clues from reduced equations of the generalized Riemann equation \eqref{griemann} may lead us to a better understanding about those conserved densities. On the one hand, the Riemann equation \eqref{riemann} was shown by Olver and Nutku to have a class of conservation law, \cite{olvnut} given by
\begin{equation*}
\partial_t\Big(u_x G(z_2,z_3,\cdots,z_n)\Big) = \partial_x\Big(-uu_x G(z_2,z_3,\cdots,z_n)\Big) ,
\end{equation*}
where $z_2 = u_x^{-3}u_{xx}$, $z_{j+1} = u_x^{-1}(\partial_x z_j)(j\geq 2)$, and $G$ is an arbitrary smooth function of its arguments. On the other hand, similar results have recently been achieved for the Hunter-Saxton equation, \cite{LiuTian} which also admits conserved densities involving arbitrary smooth functions. Supported by these facts, conserved densities of such kind are expected for the generalized Riemann equation \eqref{griemann}.

In this paper, we will construct conserved densities for the generalized Riemann equations at $N=2$, $3$ and $4$. With the help of appropriate changes of variables, we will manage to figure out the most general  conserved densities for which equation \eqref{gcl} holds. It turns out that these conserved densities involve arbitrary functions. Similar discussions will be given to few systems related to the generalized Riemann equation \eqref{griemann}, such as the Gurevich-Zybin system, the Monge-Ampere equation and the two-component Hunter-Saxton equation. The paper is organised as follows. In section II and III, we consider the generalized Riemann equation at $N=2$ and the related systems, respectively. In section IV, the generalized Riemann equations at $N=3, 4$ are studied. Last section presents some discussions.

\section{The generalized Riemann equation at $N=2$} \label{sec:2}
We rescale dependent variables as $u_1 = -2u$, $u_2 = 4v$, and write the generalized Riemann equation at $N=2$ as
\begin{subequations}
\begin{align}
u_{t} =& 2uu_{x}-2v,\label{n2na}\\
v_{t} =& 2uv_{x},\label{n2nb}
\end{align}
\end{subequations}
whose general solutions in implicit form have been obtained by quadratures. \cite{Sakovich}

Differentiating both equations \eqref{n2na} and \eqref{n2nb} with respect to $x$ yields
\begin{subequations}
\begin{align}
u_{xt} =& 2uu_{xx} + 2u_x^2 - 2v_x ,\label{n2nax} \\
v_{xt} =& \partial_x(2uv_{x}).\label{n2nbx}
\end{align}
\end{subequations}
Based on the trivial conservation law \eqref{n2nbx}, we introduce a reciprocal transformation
\[
\mathcal{T}_{1}:(x,t,u,v)\mapsto(y,\tau,\bar u,\bar v),
\]
where new variables are defined by
\begin{equation*}
\rd y=v_{x}\rd x+(2uv_{x})\rd t, \quad \rd\tau=\rd t,\quad \bar{u}(y,\tau)=u_x,\quad \bar{v}(y,\tau) = v_x.
\end{equation*}
Applying $\mathcal{T}_1$ to equations \eqref{n2nax} and \eqref{n2nbx} respectively, we obtain a system of ODEs
\begin{subequations}
\begin{align}
\bar{u}_{\tau} =& 2\bar{u}^2-2\bar{v} , \label{n2ta}\\
 \bar{v}_{\tau} =& 2\bar{u}\bar{v}. \label{n2tb}
\end{align}
\end{subequations}
If we change dependent variables as
\begin{equation*}
\mathcal{T}_2:(\bar{u},\bar{v})\mapsto (p,q) = \left(\frac{\bar{u}^2-2\bar{v}}{\bar{v}^2},\frac{\bar{u}}{\bar{v}}\right) ,
\end{equation*}
then the system \eqref{n2ta}-\eqref{n2tb} is linearized to
\begin{subequations}
\begin{align}
p_{\tau}=&0,\label{n2tap}\\
 q_{\tau}=&-2.\label{n2tbq}
\end{align}
\end{subequations}

Now suppose that $F(u_{x},v_{x},\cdots,u_{nx},v_{nx})$ is a conserved density of the system \eqref{n2na}-\eqref{n2nb} such that
\begin{equation}\label{ConLaw}
\partial_{t}\Big(F(u_{x},v_{x},\cdots,u_{nx},v_{nx})\Big)=\partial_{x}\Big(2uF(u_{x},v_{x},\cdots,u_{nx},v_{nx})\Big) ,
\end{equation}
where $u_{nx} = (\partial_x^n u)$ and $v_{nx} = (\partial_x^n v)$. Denote the conserved density $F$ in the coordinates $(y,\tau,p,q)$  as $\widehat{F}$, namely
\begin{equation*}
\widehat{F}(p,q,p_y,q_y,\cdots,p_{(n-1)y},q_{(n-1)y}) = (\mathcal{T}_2\circ\mathcal{T}_1)F(u_{x},v_{x},\cdots,u_{nx},v_{nx}) ,
\end{equation*}
then the equation \eqref{ConLaw} may be rewritten as
\begin{equation}\label{coN2F}
\partial_\tau\Big(\widehat{F}(p,q,p_y,q_y,\cdots,p_{(n-1)y},q_{(n-1)y})\Big)=\frac{4q}{q^2-p}\widehat{F}(p,q,p_y,q_y,\cdots,p_{(n-1)y},q_{(n-1)y})  ,
\end{equation}
which identically holds provided that $ p$ and $q $ solve \eqref{n2tap}-\eqref{n2tbq}.


It is easy to see that on the solutions of the system \eqref{n2tap}-\eqref{n2tbq} the equation \eqref{coN2F} reduces  to
\begin{equation*}
0 = \partial_{\tau}(\widehat{F})-\frac{4q}{q^2-p}\widehat{F} = -2\frac{\partial \widehat{F}}{\partial q}-\frac{4q}{q^2-p}\widehat{F} ,
\end{equation*}
which may be solved by the standard method, and yields
\begin{equation*}
\widehat{F}=\frac{1}{q^2-p}G(p,p_{y},q_{y},\cdots,p_{(n-1)y},q_{(n-1)y}) ,
\end{equation*}
where $G$ is an arbitrary smooth function of its arguments. Switching to the coordinates $(x,t,u,v)$ and taking account of the relations
\begin{equation*}
p = u_{x}^2v_{x}^{-2}-2v_{x}^{-1},\quad q = u_{x}v_{x}^{-1},\quad \partial_y = v_x^{-1}\partial_x,
\end{equation*}
we obtain conserved densities with arbitrary function. Indeed,
\begin{equation}\label{F2}
\frac{v_{x}}{2}G\left(\alpha_1,\alpha_2,\beta_2,\cdots,\alpha_n,\beta_n\right)
\end{equation}
solves the conservation law \eqref{ConLaw}, where new variables are defined as
\begin{equation}\label{N2v1}
  \alpha_{1}\equiv u_{x}^2v_{x}^{-2}-2v_{x}^{-1},\quad \beta_{1}\equiv u_{x}v_{x}^{-1},
\end{equation}
and
\begin{equation}\label{N2v2}
  \alpha_{k+1} \equiv (v_{x}^{-1}\partial_{x})\alpha_{k},\quad \beta_{k+1}\equiv (v_{x}^{-1}\partial_{x})\beta_{k}\quad (k\geq1).
\end{equation}

The conservation law \eqref{ConLaw} with the conserved density \eqref{F2} could be directly proved with the following lemmas.

\begin{lem}\label{lemx1}
  When the system\eqref{n2na}-\eqref{n2nb} holds, $ \partial_{t}\alpha_{k}=2u(\partial_{x}\alpha_{k})(k\geq 1),\partial_{t}\beta_{1}=2u(\partial_{x}\beta_{1})-2$ and $\partial_{t}\beta_{k}=2u(\partial_{x}\beta_{k})(k\geq2)$.
\end{lem}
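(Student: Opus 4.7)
My plan is to establish the $k=1$ cases by direct substitution, and then deduce all higher $k$ from a single commutator identity: on any smooth function of $(x,t)$, the operators $\partial_t - 2u\partial_x$ and $v_x^{-1}\partial_x$ commute whenever \eqref{n2nb} holds. Since $\alpha_{k+1}$ and $\beta_{k+1}$ are obtained from $\alpha_k, \beta_k$ by applying $v_x^{-1}\partial_x$, this commutativity propagates the invariance statements up the tower automatically.

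For the base cases, I would start from \eqref{n2nax}--\eqref{n2nbx} (already derived in the paper) and substitute. Expanding $\partial_t\alpha_1$ and $\partial_t\beta_1$ by the chain rule in terms of $u_{xt}, v_{xt}, u_x, v_x, u_{xx}, v_{xx}$, and then eliminating $u_{xt}, v_{xt}$ via \eqref{n2nax}--\eqref{n2nbx}, I expect the cubic terms proportional to $u_x^3 v_x^{-2}$ to cancel in $\partial_t\alpha_1$, leaving exactly $2u\partial_x\alpha_1$. In $\partial_t\beta_1$ the $-2v_x$ inhomogeneity inside $u_{xt}$ contributes a lone $-2$ after being divided by $v_x$, producing the advertised $2u\partial_x\beta_1 - 2$.

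The central step is the commutator identity. For arbitrary smooth $f(x,t)$, I would expand
\begin{equation*}
\bigl[\partial_t - 2u\partial_x,\; v_x^{-1}\partial_x\bigr]f
= (\partial_t v_x^{-1})\partial_x f - 2u\,\partial_x\!\bigl(v_x^{-1}\partial_x f\bigr) + v_x^{-1}\partial_x\!\bigl(2u\partial_x f\bigr),
\end{equation*}
then substitute $\partial_t v_x^{-1} = -v_x^{-2}v_{xt} = -2u_x v_x^{-1} - 2u\,v_{xx} v_x^{-2}$ (using \eqref{n2nb} differentiated in $x$), and check that the six resulting terms cancel in three pairs. Note that only \eqref{n2nb}, not \eqref{n2na}, is needed here, which is consistent with $v_x$ alone featuring in the reciprocal transformation.

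Once commutativity is established, the induction is immediate: assuming $(\partial_t - 2u\partial_x)\alpha_k = 0$, applying $v_x^{-1}\partial_x$ and using the identity yields $(\partial_t - 2u\partial_x)\alpha_{k+1} = 0$; for the $\beta$ tower, the inhomogeneous $-2$ at $k=1$ is killed at the next level since $(\partial_t - 2u\partial_x)\beta_2 = v_x^{-1}\partial_x(-2) = 0$, after which the homogeneous induction takes over for $k\geq 2$. The main obstacle I anticipate is bookkeeping in the commutator computation—keeping signs and powers of $v_x$ straight—but substantively the lemma reduces to one algebraic identity followed by a trivial induction.
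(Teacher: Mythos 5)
Your proposal is correct and follows essentially the same route as the paper: the base cases are verified by direct substitution, and the higher $k$ are handled by propagating along the recursion $\gamma_{k+1}=(v_x^{-1}\partial_x)\gamma_k$, using only equation \eqref{n2nb} in the form $v_{xt}=\partial_x(2uv_x)$. Your commutator identity $[\partial_t-2u\partial_x,\,v_x^{-1}\partial_x]=0$ is precisely the content of the paper's inductive step (there phrased as: if $\partial_t\gamma_k=2u\partial_x\gamma_k+c_k$ with $c_k$ constant, then $\partial_t\gamma_{k+1}=2u\partial_x\gamma_{k+1}$), so this is a repackaging rather than a different argument, and your treatment of the inhomogeneous $-2$ in the $\beta$ tower matches as well.
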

\begin{proof}
The evolution of $\alpha_{1}$ and $\beta_{1}$ is shown by direct calculation.

Both $\alpha_{k}$ and $\beta_{k}$ are defined by the same recursive relation $\gamma_{k+1}=(v _{x}^{-1}\partial_{x})\gamma_{k}(k\geq1)$. If we assume $\partial_{t}\gamma_{k}=2u(\partial_{x}\gamma_{k})+c_{k}$, where $c_{k}$ is a certain constant, then based on the
recursive relation, we have
\begin{equation*}
  \partial_{t}\gamma_{k+1}=-v_{x}^{-2}v_{xt}(\partial_{x}\gamma_{k})+v_{x}^{-1}(\partial_{x}\partial_{t}\gamma_{k})
  =2u(-v_{x}^{-2}v_{2x}(\partial_{x}\gamma_{k})+v_{x}^{-1}(\partial_{x}^2\gamma_{k}))=2u(\partial_{x}\gamma_{k+1}).
\end{equation*}
Hence, the conclusion holds for all ${\alpha_{k}}^{,}s$ and ${\beta_{k}}^{,}s$.
\end{proof}

For any $n\in \mathbb{Z}_{+}$, we introduce a transformation
\[
\Gamma:(u_{x},v_{x},\cdots,u_{nx},v_{nx})\mapsto (\alpha_1,\beta_{1},\cdots,\alpha_{n},\beta_{n}) ,
\]
where $\alpha_k$'s and $\beta_k$'s are defined by equations \eqref{N2v1} and \eqref{N2v2}.
\begin{lem}\label{lemx2}
$\Gamma$ is invertible.
\end{lem}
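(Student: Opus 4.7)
The plan is to construct an explicit inverse of $\Gamma$ by induction on $n$, showing that the $2n$ unknowns $(u_x,v_x,\ldots,u_{nx},v_{nx})$ can be recovered one order at a time from $(\alpha_1,\beta_1,\ldots,\alpha_n,\beta_n)$ by successively solving $2\times 2$ linear systems whose determinants are nonvanishing.

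For the base case $n=1$, I would use the algebraic identity $\beta_1^{2}-\alpha_1=2v_x^{-1}$, which immediately yields $v_x=2/(\beta_1^{2}-\alpha_1)$ and then $u_x=\beta_1 v_x$; note that $v_x\neq 0$ is a standing assumption, implicit in the very definition of $\alpha_1$ and $\beta_1$.

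For the inductive step, assume $(u_x,v_x,\ldots,u_{kx},v_{kx})$ have been recovered from $(\alpha_1,\beta_1,\ldots,\alpha_k,\beta_k)$. From the recursion \eqref{N2v2}, both $\alpha_k$ and $\beta_k$ are smooth functions of $u_x,v_x,\ldots,u_{kx},v_{kx}$, and because $\partial_x$ raises the order by exactly one, the top-order variables appear linearly in $\alpha_{k+1}$ and $\beta_{k+1}$:
\begin{equation*}
\begin{pmatrix}\alpha_{k+1}\\[2pt]\beta_{k+1}\end{pmatrix}
= v_x^{-1}
\begin{pmatrix}
\partial\alpha_k/\partial u_{kx} & \partial\alpha_k/\partial v_{kx}\\[2pt]
\partial\beta_k/\partial u_{kx} & \partial\beta_k/\partial v_{kx}
\end{pmatrix}
\begin{pmatrix}u_{(k+1)x}\\[2pt] v_{(k+1)x}\end{pmatrix}
+ R_k,
\end{equation*}
where $R_k$ is a known function of the already-recovered quantities. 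Denoting by $J_k$ the determinant of the Jacobian matrix appearing above, this identity yields the recursion $J_{k+1}=v_x^{-2}J_k$, hence $J_k=v_x^{-2(k-1)}J_1$. A direct computation of the base case (using the explicit forms of $\alpha_1$ and $\beta_1$) gives $J_1=-2v_x^{-3}$, so $J_k=-2v_x^{-(2k+1)}$, which never vanishes. Therefore the $2\times 2$ system can be solved uniquely for $(u_{(k+1)x},v_{(k+1)x})$ as a function of the already-known variables, closing the induction and exhibiting $\Gamma^{-1}$.

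The only step that requires care is the clean separation of $\partial_x\alpha_k$ and $\partial_x\beta_k$ into a part linear in $(u_{(k+1)x},v_{(k+1)x})$ and a remainder independent of these top-order derivatives; once this structural observation is established, the Jacobian recursion and the base-case computation reduce the proof to bookkeeping.
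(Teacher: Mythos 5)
Your proof is correct and follows essentially the same route as the paper: the identity $\beta_1^2-\alpha_1=2v_x^{-1}$ for the base case, and an induction in which the top-order derivatives $(u_{kx},v_{kx})$ enter $(\alpha_k,\beta_k)$ linearly through a non-singular $2\times2$ matrix. Your Jacobian recursion $J_{k+1}=v_x^{-2}J_k$ with $J_1=-2v_x^{-3}$ reproduces exactly the determinant $-2v_x^{-2k-1}$ of the explicit matrix the paper writes down, so the two arguments coincide up to presentation.
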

\begin{proof}
It is sufficient to formulate $(u_{x},v_{x},\cdots,u_{nx},v_{nx})$ in terms of $(\alpha_1,\beta_{1},\cdots,\alpha_{n},\beta_{n})$. From equation \eqref{N2v1}, we immediately obtain
\begin{equation*}
  u_{x}=\frac{2\beta_{1}}{\beta_{1}^2-\alpha_{1}},\quad v_{x}=\frac{2}{\beta_{1}^2-\alpha_{1}}.
\end{equation*}
From the recursive relation \eqref{N2v2}, we have
\begin{equation*}
  \begin{pmatrix}
    \alpha_{k}\\
    \beta_{k}
  \end{pmatrix}
  =\begin{pmatrix}
    2u_{x}v_{x}^{-k-1}          &       2(v_{x}-u_{x}^2)v_{x}^{-k-2}\\
    v_{x}^{-k}                  &      -u_{x}v_{x}^{-k-1}
  \end{pmatrix}
   \begin{pmatrix}
  u_{kx}\\
  v_{kx}
  \end{pmatrix}
  +\mathbb{G}_{1}(u_{x},v_{x},\cdots,u_{(k-1)x},v_{(k-1)x}),
\end{equation*}
where $\mathbb{G}_{1}$ is a 2-dimensional vector function. The $2\times 2$ matrix in the right hand side is non-singular, so we could solve $u_{kx}$ and $v_{kx}$ step by step, and formulate them in terms of $(\alpha_{1},\beta_{1},\cdots,\alpha_{k},\beta_{k})$.
\end{proof}

Lemmas \ref{lemx1} and \ref{lemx2} enable us to construct conserved densities explicitly depending on $u$ and $v$ for the system \eqref{n2na}-\eqref{n2nb}.  Suppose that $F(u,v,u_{x},v_{x},\cdots,u_{nx},v_{nx})$ is a conserved density such that
\begin{equation}\label{ConLaw2}
  \partial_{t}\left(F(u,v,u_{x},v_{x},\cdots,u_{nx},v_{nx})\Big)=\partial_{x}\Big(2uF(u,v,u_{x},v_{x},\cdots,u_{nx},v_{nx})\right).
\end{equation}
According to lemma \ref{lemx2}, any $F(u,v,u_{x},v_{x},\cdots,u_{nx},v_{nx})$ could be reformulated in terms of $(u,v,\alpha_{1},\beta_{1},\cdots,\alpha_{n},\beta_{n})$ as
\begin{equation*}
\hat F(u,v,\alpha_{1},\beta_{1},\cdots,\alpha_{n},\beta_{n})=\Gamma F(u,v,u_{x},v_{x},\cdots,u_{nx},v_{nx}),
\end{equation*}
and with the help of lemma \ref{lemx1}, the conservation law \eqref{ConLaw2} could be calculated as
\begin{equation*}
  0=\partial_{t}\hat F-\partial_{x}(2u\hat F)=-2v\frac{\partial\hat F}{\partial u}-2\frac{\partial\hat F}{\partial \beta_{1}}-\frac{4\beta_{1}}{\beta_{1}^2-\alpha_{1}}\hat F.
\end{equation*}
Solving it by the standard method gives us conserved densities
\begin{equation*}
\hat F= \frac{v_{x}}{2}G(v,\alpha_{1},\beta_{1}-uv^{-1},\alpha_{2},\beta_{2},\cdots,\alpha_{n},\beta_{n}),
\end{equation*}
which takes conserved densities of the form \eqref{F2} as special cases. The result is summarized as
\begin{prop}\label{prop:1}
On solutions of the generalized Riemann equation at $N=2$ \eqref{n2na}-\eqref{n2nb}, the conservation law \eqref{ConLaw2} holds if and only if
\begin{equation*}
F(u,v,u_{x},v_{x},\cdots,u_{nx},v_{nx}) = \frac{v_{x}}{2}G(v,\alpha_{1},\beta_{1}-uv^{-1},\alpha_{2},\beta_{2},\cdots,\alpha_{n},\beta_{n}),
\end{equation*}
where $\alpha_k$'s and $\beta_k$'s are defined by equations \eqref{N2v1} and \eqref{N2v2}, and $G$ is an arbitrary smooth function of its arguments.
\end{prop}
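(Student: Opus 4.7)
The strategy is to reduce the conservation law to a first-order linear PDE in the coordinates $(u,v,\alpha_1,\beta_1,\ldots,\alpha_n,\beta_n)$, and then solve by the method of characteristics. The heavy lifting is already done by Lemmas \ref{lemx1} and \ref{lemx2}.

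First I would invoke Lemma \ref{lemx2} to recoordinatize: any $F(u,v,u_x,v_x,\ldots,u_{nx},v_{nx})$ becomes $\hat F(u,v,\alpha_1,\beta_1,\ldots,\alpha_n,\beta_n)$ with no loss of generality. Next, using \eqref{n2na}--\eqref{n2nb} together with Lemma \ref{lemx1}, I would expand $\partial_t\hat F$ by the chain rule. The terms proportional to $2u$ combine to give exactly $2u\,\partial_x\hat F$, because each of $u_t,v_t,\partial_t\alpha_k,\partial_t\beta_k$ contributes its $2u\,\partial_x(\cdot)$ piece; the only remainders are the constant parts $-2v$ (from $u_t$) and $-2$ (from $\partial_t\beta_1$). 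Subtracting $\partial_x(2u\hat F)=2u_x\hat F+2u\,\partial_x\hat F$ cancels the $2u\,\partial_x\hat F$ piece, and substituting $u_x=2\beta_1/(\beta_1^2-\alpha_1)$ from the inversion formulas in Lemma \ref{lemx2} gives precisely
\begin{equation*}
-2v\,\frac{\partial \hat F}{\partial u}-2\,\frac{\partial \hat F}{\partial\beta_1}-\frac{4\beta_1}{\beta_1^2-\alpha_1}\,\hat F=0.
\end{equation*}

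This is a first-order linear PDE in which $v,\alpha_1,\alpha_k,\beta_k$ ($k\ge 2$) appear only as parameters. Its characteristic system is $du/v=d\beta_1/1$ with all parametric coordinates held fixed, yielding the invariant $\beta_1-uv^{-1}$; along a characteristic the equation for $\hat F$ reduces to $d\hat F/\hat F=-2\beta_1/(\beta_1^2-\alpha_1)\,d\beta_1$, integrating to $\hat F=(\beta_1^2-\alpha_1)^{-1}G$ for an arbitrary smooth function $G$ of the characteristic invariants $v,\alpha_1,\beta_1-uv^{-1},\alpha_2,\beta_2,\ldots,\alpha_n,\beta_n$. Converting back through $v_x=2/(\beta_1^2-\alpha_1)$ gives exactly the stated form $F=(v_x/2)G(\cdots)$, establishing the "only if" direction.

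For the "if" direction I would simply verify that any $F$ of the asserted shape satisfies \eqref{ConLaw2}: this is a direct computation reversing the above steps, since by Lemma \ref{lemx1} every argument of $G$ is transported by $\partial_t-2u\partial_x$ with the single non-trivial exception that $\partial_t(\beta_1-uv^{-1})=2u\partial_x(\beta_1-uv^{-1})-2+2v\cdot v^{-1}=2u\partial_x(\beta_1-uv^{-1})$ (the $-2$ from $\partial_t\beta_1$ cancels against the contribution from $\partial_t u=2uu_x-2v$), so $G$ is carried along the flow; the prefactor $v_x/2$ absorbs the $2u_x\hat F$ term via the identity noted above. I expect no conceptual obstacle, but the one point that deserves care is checking the cancellation that produces the invariant $\beta_1-uv^{-1}$ — this is the non-obvious coupling of the $u$ and $\beta_1$ channels through the inhomogeneous $-2v$ and $-2$ source terms, and it is precisely what forces $u,v$ to enter $G$ only in the combination $v$ and $\beta_1-uv^{-1}$.
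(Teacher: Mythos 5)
Your proposal is correct and follows essentially the same route as the paper: recoordinatize via Lemma \ref{lemx2}, use Lemma \ref{lemx1} to reduce \eqref{ConLaw2} to the linear first-order equation $-2v\,\partial_u\hat F-2\,\partial_{\beta_1}\hat F-\tfrac{4\beta_1}{\beta_1^2-\alpha_1}\hat F=0$, and solve by characteristics to get the invariant $\beta_1-uv^{-1}$ and the prefactor $(\beta_1^2-\alpha_1)^{-1}=v_x/2$. The only difference is that you spell out the converse verification explicitly, which the paper leaves implicit in the reversibility of its derivation.
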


In the rest part of the section, we manage to find new conservation laws other than \eqref{ConLaw2} for the system \eqref{n2na}-\eqref{n2nb}. Let us consider a smooth function
\begin{equation*}
  H(u_{x},v_{x},\cdots,u_{nx},v_{nx})=v_{x}P(\alpha_{1},\beta_{1},\cdots,\alpha_{n},\beta_{n}).
\end{equation*}
With the aid of lemma \ref{lemx1}, it is straightforward to have
\begin{align*}
  \partial_{t}H =\partial_{x}(2uv_{x}P)-2v_{x}\frac{\partial P}{\partial \beta_{1}}.
\end{align*}
If there exists $Q(\alpha_{1},\beta_{1},\cdots,\alpha_{n-1},\beta_{n-1})$ such that
\begin{equation}\label{eqQ}
\partial_{x}Q = -2v_{x}\frac{\partial P}{\partial \beta_{1}},
\end{equation}
then $ H=v_{x}P(\alpha_{1},\beta_{1},\cdots,\alpha_{n},\beta_{n})$ would be a conserved density such that
\begin{equation*}
  \partial_{t}H=\partial_{x}(2uH+Q).
\end{equation*}
Taking account of the recursive relation \eqref{N2v2}, we have
\begin{equation*}
\partial_{x}Q = \sum_{k=1}^{n-1}\left(\frac{\partial Q}{\partial\alpha_{k}}(\partial_x\alpha_k)+\frac{\partial Q}{\partial\beta_{k}}(\partial_x\beta_k)\right) = v_x\sum_{k=1}^{n-1}\left(\frac{\partial Q}{\partial\alpha_{k}}\alpha_{k+1}+\frac{\partial Q}{\partial\beta_{k}}\beta_{k+1}\right).
\end{equation*}
Then equation \eqref{eqQ} is rewritten as
\begin{equation*}
\frac{\partial P}{\partial \beta_{1}} = -\frac{1}{2}\sum_{k=1}^{n-1}\left(\frac{\partial Q}{\partial\alpha_{k}}\alpha_{k+1}+\frac{\partial Q}{\partial\beta_{k}}\beta_{k+1}\right).
\end{equation*}

We summarize above discussions as
\begin{prop}\label{prop:2}
When the generalized Riemann equation at $N=2$ \eqref{n2na}-\eqref{n2nb} holds, given an arbitrary smooth function $Q(\alpha_{1},\beta_{1},\cdots,\alpha_{n-1},\beta_{n-1})$, let
\begin{equation}\label{eqP}
P = -\frac{1}{2}\int \sum_{k=1}^{n-1}\left(\frac{\partial Q}{\partial\alpha_{k}}\alpha_{k+1}+\frac{\partial Q}{\partial\beta_{k}}\beta_{k+1}\right)\rd\beta_1,
\end{equation}
then $H = v_xP$ is conserved and satisfies the conservation law $\partial_{t}H=\partial_{x}(2uH+Q)$.
\end{prop}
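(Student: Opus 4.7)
The plan is to verify the claim by direct computation, essentially following and formalizing the derivation already laid out in the paragraphs preceding Proposition \ref{prop:2}.

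First I would differentiate $H = v_xP(\alpha_1,\beta_1,\ldots,\alpha_n,\beta_n)$ with respect to $t$. Using Lemma \ref{lemx1} together with $\partial_t v_x = \partial_x(2uv_x)$, which is precisely \eqref{n2nbx}, the Leibniz rule converts every $\partial_t$ acting on an $\alpha_k$ or $\beta_k$ into $2u\partial_x$, with the sole exception of the inhomogeneous $-2$ arising in $\partial_t\beta_1$. Collecting the $2u\partial_x$ contributions into the single total derivative $\partial_x(2uv_xP) = \partial_x(2uH)$ leaves
\[
\partial_t H = \partial_x(2uH) - 2v_x\frac{\partial P}{\partial\beta_1},
\]
which is the displayed intermediate identity preceding \eqref{eqQ}.

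Second I would express $\partial_x Q$ in terms of the $\alpha_k$'s and $\beta_k$'s. Since $Q = Q(\alpha_1,\beta_1,\ldots,\alpha_{n-1},\beta_{n-1})$, the recursion \eqref{N2v2} (in the form $\partial_x\alpha_k = v_x\alpha_{k+1}$ and $\partial_x\beta_k = v_x\beta_{k+1}$) gives
\[
\partial_x Q = v_x\sum_{k=1}^{n-1}\Bigl(\frac{\partial Q}{\partial\alpha_k}\alpha_{k+1} + \frac{\partial Q}{\partial\beta_k}\beta_{k+1}\Bigr).
\]
The required cancellation $-2v_x\,\partial P/\partial\beta_1 = \partial_x Q$ therefore reduces to the pointwise equation
\[
\frac{\partial P}{\partial\beta_1} = -\frac{1}{2}\sum_{k=1}^{n-1}\Bigl(\frac{\partial Q}{\partial\alpha_k}\alpha_{k+1} + \frac{\partial Q}{\partial\beta_k}\beta_{k+1}\Bigr),
\]
which is solved by construction by the single-variable antiderivative \eqref{eqP}.

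Combining the two steps produces $\partial_t H = \partial_x(2uH + Q)$, as claimed. I do not foresee a substantial obstacle; the calculation is algebraic once Lemma \ref{lemx1} and the recursion \eqref{N2v2} are in hand. The only mild subtlety worth noting is that the indefinite integral \eqref{eqP} fixes $P$ only up to an additive function of $(\alpha_1,\alpha_2,\beta_2,\ldots,\alpha_n,\beta_n)$ independent of $\beta_1$; any such ambiguity contributes to $H$ a term of the form $v_xG$ already accounted for by Proposition \ref{prop:1}, so it is harmless for the statement being proved.
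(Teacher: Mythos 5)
Your proposal is correct and follows essentially the same route as the paper: the paper derives Proposition \ref{prop:2} by exactly this computation (applying Lemma \ref{lemx1} and $v_{xt}=\partial_x(2uv_x)$ to get $\partial_tH=\partial_x(2uH)-2v_x\,\partial P/\partial\beta_1$, then using the recursion \eqref{N2v2} to rewrite $\partial_xQ$ and choosing $P$ as the $\beta_1$-antiderivative). Your remark on the integration ``constant'' being absorbed by Proposition \ref{prop:1} matches the paper's own observation in the example that follows.
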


As an implementation of proposition \ref{prop:2}, we construct a conserved density from the smooth function $Q(\alpha_1)\beta_1$. Following the formula \eqref{eqP}, we have
\begin{equation*}
P = -\frac{1}{2}\int \Big(Q^\prime(\alpha_1)\alpha_2\beta_1 + Q(\alpha_1)\beta_2\Big)\rd\beta_1 = -\frac{1}{4}Q^\prime(\alpha_1)\alpha_2\beta_1^2 -\frac{1}{2}Q(\alpha_1)\beta_1\beta_2 + T(\alpha_1,\alpha_2,\beta_2) ,
\end{equation*}
where $Q^\prime(\alpha_1)=\partial Q/\partial\alpha_1$ and $T(\alpha_1,\alpha_2,\beta_2)$ is the ``constant''  of integration. Then, we obtain a conserved density
\begin{equation*}
H(u_x,v_x,u_{xx},v_{xx}) = -\frac{1}{4}v_xQ^\prime(\alpha_1)\alpha_2\beta_1^2 -\frac{1}{2}v_xQ(\alpha_1)\beta_1\beta_2 + v_xT(\alpha_1,\alpha_2,\beta_2)
\end{equation*}
such that
\begin{equation}\label{ConLaw3}
\partial_tH(u_x,v_x,u_{xx},v_{xx}) = \partial_x(2uH(u_x,v_x,u_{xx},v_{xx})+Q(\alpha_1)\beta_1).
\end{equation}
We notice that $v_xT(\alpha_1,\alpha_2,\beta_2)$ is also a conserved density as implied by proposition \ref{prop:1}, and satisfies
\begin{equation*}
\partial_t(v_xT(\alpha_1,\alpha_2,\beta_2)) = \partial_x(2uv_xT(\alpha_1,\alpha_2,\beta_2)).
\end{equation*}
So the conservation law \eqref{ConLaw3} could be simplified by replacing $H(u_x,v_x,u_{xx},v_{xx})$ by
\begin{equation*}
-\frac{1}{4}v_xQ^\prime(\alpha_1)\alpha_2\beta_1^2 -\frac{1}{2}v_xQ(\alpha_1)\beta_1\beta_2.
\end{equation*}

To conclude this section, we comment on that regarding the generalized Riemann equation at $N=2$ \eqref{n2na}-\eqref{n2nb}, while above two propositions provide most conserved densities, they by no means exhaust all possibilities since there are extra conservation laws explicitly depending on time, \cite{Popo} such as
  \begin{align*}
    &\partial_{t}(2tu_{x}v+v)=\partial_{x}\big(t(4uu_{x}v-2v^2)+2uv\big),\\
    &\partial_{t}(u+2tv+2t^2u_{x}v)=\partial_{x}\big(t^2(4uu_{x}v-2v^2)+4tuv+u^2\big).
  \end{align*}

\section{Conservation laws of related systems}
Under appropriate changes of variables and/or reductions, the generalized Riemann equation at $N=2$ \eqref{n2na}-\eqref{n2nb} is converted/reduced to the Gurevich-Zybin system, Monge-Ampere equation,  two-component Hunter-Saxton equation and Hunter-Saxton equation. In this section, we will show that conserved densities of these systems could be easily deduced from those presented in section \ref{sec:2}.

\subsection{The Gurevich-Zybin system and Monge-Ampere equation}
For the generalized Riemann equation at $N=2$ \eqref{n2na}-\eqref{n2nb}, we introduce $u = -\hat{u}/2$ and $v = -\Phi_x/4$. Then equation \eqref{n2na} is changed to
\begin{equation}\label{gzu}
\hat{u}_{t}+\hat{u}\hat{u}_{x}+\Phi_{x}=0,
\end{equation}
while equation \eqref{n2nb} to
\begin{equation}\label{gztp}
\Phi_{xt} + \hat{u}\Phi_{xx} =0.
\end{equation}
Differentiating equation \eqref{gztp} with respect to $x$ and letting $\rho=\Phi_{xx}$, we obtain
\begin{equation}\label{gzrho}
\rho_{t}+\partial_{x}(\hat{u}\rho)=0 .
\end{equation}
The coupled system \eqref{gzu} and \eqref{gzrho} (N.B. $\rho=\Phi_{xx}$) is nothing but the Gurevich-Zybin system in 1-dimensional space,\cite{GZ1,GZ2} which has been shown to be linearisable and possesses infinitely many local Hamiltonian structures, local Lagrangian representations and local conservation laws (symmetries).\cite{Pavlov} As observed by Pavlov, solving $\hat{u}$ from equation \eqref{gztp} and substituting it into a conservation law of the Gurevich-Zybin system, i.e.
\begin{equation*}
(\rho \hat u)_{t}+(\rho {\hat u}^2+\Phi_{x}^2/2)_{x}=0 ,
\end{equation*}
we obtain the Monge-Ampere equation
\begin{equation}\label{monge}
  \Phi_{tt}=\frac{\Phi_{xt}^2}{\Phi_{xx}}+\frac{1}{2}\Phi_{x}^2.
\end{equation}

Taken the above connection into consideration, their conserved densities are easily deduced from those of the system \eqref{n2na}-\eqref{n2nb}. For instance, inferred from the conserved density \eqref{F2}, the Gurevich-Zybin system \eqref{gzu} and \eqref{gzrho} has conserved densities of the form
\begin{equation*}
\rho G(\bar{\alpha}_{1},\bar{\alpha}_{2},\bar{\beta}_{2},\cdots,\bar{\alpha}_{n},\bar{\beta}_{n}),
\end{equation*}
such that the conservation law
\begin{equation*}
\partial_{t}(\rho G)=\partial_{x}(-\hat u\rho G),
\end{equation*}
where $\bar{\alpha}_{1}=(\hat u_{x}^2+2\rho)\rho^{-2}$, $\bar{\beta}_{1}=\hat u_{x}\rho^{-1}$ and
\begin{equation*}
\bar{\alpha}_{k+1}=(\rho^{-1}\partial_{x})\bar{\alpha}_{k}, \quad \bar{\beta}_{k+1}=(\rho^{-1}\partial_{x})\bar{\beta}_{k}\quad(k\ge1).
\end{equation*}
Similarly, replacing $u$ by $\Phi_{xt}/(2\Phi_{xx})$, while $v$ by $-\Phi_x/4$ in the conserved density \eqref{F2} yields conserved densities of the Monge-Ampere equation \eqref{monge}.

\subsection{Two-component Hunter-Saxton equation}
Let $v_x  = (u_x^2 + \eta^2)/2$, then the system \eqref{n2nax}-\eqref{n2nbx} is converted to
\begin{subequations}
\begin{align}
u_{xt}&=2uu_{xx}+u_{x}^2-\eta^2,\label{tHS1} \\
\eta_{t}&=2(u\eta)_{x},\label{tHS2}
\end{align}
\end{subequations}
which obviously reduces to the Hunter-Saxton equation \cite{HS,HZ} when $\eta=0$, and is referred to as two-component Hunter-Saxton equation. \cite{Pavlov,PopPry}

Replacing $v_x$ by $(u_x^2 + \eta^2)/2$ in \eqref{F2} immediately gives us conserved densities
\begin{equation}\label{HScon}
(u_x^2 + \eta^2)G_{1}(\tilde{\alpha}_1,\tilde{\alpha}_2,\tilde{\beta}_2,\cdots,\tilde{\alpha}_n,\tilde{\beta}_n)
\end{equation}
for the two-component Hunter-Saxton equation \eqref{tHS1}-\eqref{tHS2},
where $\tilde{\alpha}_1 = \eta^2(u_{x}^2+\eta^2)^{-2}$, $\tilde{\beta}_1=u_{x}(u_{x}^2+\eta^2)^{-1}$ and
\begin{equation*}
\tilde{\alpha}_{k+1} = (u_{x}^2+\eta^2)^{-1}\partial_x\tilde{\alpha}_{k},\quad
\tilde{\beta}_{k+1} = (u_{x}^2+\eta^2)^{-1}\partial_x\tilde{\beta}_{k}\quad (k\geq 1)
\end{equation*}
and the corresponding conservation law is given by
\begin{equation*}
\partial_{t}\left((u_{x}^2+\eta^2)G_{1}\Big)=\partial_{x}\Big(2u(u_{x}^2+\eta^2)G_{1}\right)
\end{equation*}
When $\eta=0$, all $\tilde{\alpha}_{k}$'s vanish and the quantity \eqref{HScon} reduces to
\begin{equation*}
u_{x}^2  G_{2}\Big((u_x^{-2}\partial_x)u_x^{-1},(u_x^{-2}\partial_x)^2u_x^{-1},\cdots,(u_x^{-2}\partial_x)^{n-1}u_x^{-1}\Big),
\end{equation*}
where $G_2=G_1|_{\eta=0}$. Thus we recover the conserved density with arbitrary function of the Hunter-Saxton equation recently reported by Tian and Liu. \cite{LiuTian}

\section {The generalized Riemann equations at $N=3,4$}
In this section, we will construct conserved densities with arbitrary smooth functions for the generalized Riemann equations at $N=3,4$. To this end, we will introduce new dynamical variables in such a way that their evolutions take  simple forms. All results could be verified as we did in section \ref{sec:2}, so detailed proofs are omitted.

\subsection{$N=3$}
By rescaling dependent variables as $u_{1}=-2u,u_{2}=4v$ and $u_{3}=-8w$, the generalized  Riemann system at $N=3$ is rewritten as
\begin{subequations}
\begin{align}
u_{t}&=2uu_{x}-2v,\label{N3a}\\
v_t&=2uv_{x}-2w,\label{N3b}\\
w_t&=2uw_{x}.\label{N3c}
\end{align}
\end{subequations}
To get conserved densities with concise expressions, let
\begin{equation*}\label{N3v1}
\alpha_{1}\equiv(v_{x}^2-2u_{x}w_{x})w_{x}^{-2},\quad\beta_{1}\equiv(v_{x}^3+3w_{x}^2-3u_{x}v_{x}w_{x})w_{x}^{-3},
\quad \gamma_{1}\equiv v_{x}w_{x}^{-1},
\end{equation*}
and
\begin{equation*}\label{N3v2}
  \alpha_{k+1}\equiv(w_{x}^{-1}\partial_{x})^{k}\alpha_{1},\quad \beta_{k+1}\equiv(w_{x}^{-1}\partial_{x})^{k}\beta_{1},\quad \gamma_{k+1}\equiv(w_{x}^{-1}\partial_{x})^{k}\gamma_{1}\quad (k\geq1)
\end{equation*}
then in the same way as we proved lemmas \ref{lemx1} and \ref{lemx2}, we obtain
\begin{lem}\label{lemx3}
 when the system \eqref{N3a}-\eqref{N3c} holds, $\partial_{t}\alpha_{k}=2u(\partial_{x}\alpha_{k})$, $\partial_{t}\beta_{k}=2u(\partial_{x}\beta_{k})(k\geq1)$, $\partial_{t}\gamma_{1}=2u(\partial_{x}\gamma_{1})-2$ and
 $\partial_{t}\gamma_{k}=2u(\partial_{x}\gamma_{k})(k\geq2)$.
\end{lem}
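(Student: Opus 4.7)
The plan is to mirror closely the strategy used for Lemma~\ref{lemx1}, splitting the argument into base cases and an inductive step driven by the common recursion $\gamma_{k+1}=(w_x^{-1}\partial_x)\gamma_k$ (used here generically for any of $\alpha,\beta,\gamma$).

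First, I would dispose of the inductive step once and for all. Suppose a quantity $\gamma$ satisfies $\partial_t\gamma = 2u\,\partial_x\gamma + c$ for some constant $c$ (possibly zero). Using $w_t = 2uw_x$ from \eqref{N3c}, so that $w_{xt}=2u_xw_x+2uw_{xx}$, a direct computation gives
\begin{align*}
\partial_t\bigl(w_x^{-1}\partial_x\gamma\bigr)
&= -w_x^{-2}w_{xt}\,\partial_x\gamma + w_x^{-1}\partial_x\partial_t\gamma \\
&= -w_x^{-2}(2u_xw_x+2uw_{xx})\partial_x\gamma + w_x^{-1}\partial_x\bigl(2u\,\partial_x\gamma + c\bigr) \\
&= 2u\,\partial_x\bigl(w_x^{-1}\partial_x\gamma\bigr),
\end{align*}
so the constant $c$ is killed by the differentiation and the $u_x$-terms cancel. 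Thus, once the base case for any of $\alpha_1,\beta_1,\gamma_1$ is known, all higher members $\alpha_k,\beta_k,\gamma_k$ ($k\geq2$) will automatically satisfy the homogeneous relation $\partial_t(\cdot)_k = 2u\,\partial_x(\cdot)_k$.

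The work therefore reduces to three base-case verifications using the flows $u_t=2uu_x-2v$, $v_t=2uv_x-2w$, $w_t=2uw_x$ (whence $u_{xt}=2u_x^2+2uu_{xx}-2v_x$, $v_{xt}=2u_xv_x+2uv_{xx}-2w_x$, $w_{xt}=2u_xw_x+2uw_{xx}$). For $\gamma_1=v_xw_x^{-1}$ a short calculation yields
\begin{equation*}
\partial_t\gamma_1 = w_x^{-1}v_{xt}-v_xw_x^{-2}w_{xt} = 2u\,\partial_x\gamma_1 - 2,
\end{equation*}
the $-2$ being the signature of the inhomogeneous term $-2w$ in \eqref{N3b}. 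For $\alpha_1 = v_x^2w_x^{-2}-2u_xw_x^{-1}$ I expect the two pieces to produce compensating inhomogeneous terms: the first contributes $-4v_xw_x^{-1}$ beyond $2u\,\partial_x(v_x^2w_x^{-2})$, while the second contributes $+4v_xw_x^{-1}$ beyond $2u\,\partial_x(-2u_xw_x^{-1})$, so that the sum is purely homogeneous. The case $\beta_1 = (v_x^3+3w_x^2-3u_xv_xw_x)w_x^{-3}$ is analogous but longer: after expanding each of the three monomials one checks that all residual, non-$2u\partial_x$ terms cancel in the combination. The specific coefficients $1,3,-3$ in $\beta_1$ are exactly those forced by this cancellation requirement.

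The only real obstacle is this last computation for $\beta_1$, which is a bookkeeping exercise involving up to second-order derivatives of $u,v,w$ and products of many rational monomials in $w_x$. Nothing subtle happens—one just groups terms carrying the common factor $u$ and verifies that the remaining pieces vanish identically—but it is where any sign or coefficient error would appear. Given the patterns already verified for $\alpha_1$ and $\gamma_1$, and given that $\beta_1$ was tailored for precisely this cancellation, the computation goes through. The recursive step established above then completes the proof for all $k\geq1$.
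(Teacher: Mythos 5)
Your proposal is correct and follows the same strategy the paper indicates for this lemma (it explicitly defers to the method of Lemma~\ref{lemx1}): direct verification of the base cases $\alpha_1,\beta_1,\gamma_1$, plus the observation that the recursion $(w_x^{-1}\partial_x)$ together with $w_{xt}=2u_xw_x+2uw_{xx}$ propagates the homogeneous relation and annihilates any additive constant. Your stated inhomogeneous contributions ($\mp 4v_xw_x^{-1}$ for the two pieces of $\alpha_1$, and the cancelling $\pm 6v_x^2w_x^{-2}\pm 6u_xw_x^{-1}$ terms for $\beta_1$) check out.
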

\begin{lem}\label{lemx4}
For any $n\in\mathbb{Z}_{+}$, the transformation $$\Gamma_{2}:(u_{x},v_{x},w_{x},\cdots,u_{nx},v_{nx},w_{nx})\mapsto (\alpha_{1},\beta_{1},\gamma_{1},\cdots,\alpha_{n},\beta_{n},\gamma_{n})$$ is invertible.
\end{lem}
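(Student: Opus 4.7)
The plan is to mimic the proof of Lemma \ref{lemx2} exactly, now with a $3\times 3$ rather than $2\times 2$ block structure, and to proceed inductively on the order of differentiation.

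For the base case I would first invert the defining relations of $(\alpha_1,\beta_1,\gamma_1)$ to recover $(u_x,v_x,w_x)$. From $\gamma_1=v_xw_x^{-1}$ we get $v_x=\gamma_1 w_x$; plugging this into the definition of $\alpha_1$ gives $u_x=\tfrac{1}{2}(\gamma_1^2-\alpha_1)w_x$; substituting both into the definition of $\beta_1$ produces a single linear equation in $w_x^{-1}$, with explicit solution
\begin{equation*}
w_x=\frac{6}{2\beta_1+\gamma_1^3-3\alpha_1\gamma_1},
\end{equation*}
from which $v_x$ and $u_x$ follow. Hence $(u_x,v_x,w_x)$ is a rational function of $(\alpha_1,\beta_1,\gamma_1)$, and in particular the Jacobian $J$ of $(\alpha_1,\beta_1,\gamma_1)$ with respect to $(u_x,v_x,w_x)$ is non-singular.

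For the inductive step, the recursion $\alpha_{k+1}=w_x^{-1}\partial_x\alpha_k$ (and likewise for $\beta_{k+1}$, $\gamma_{k+1}$) implies via the chain rule that the only contribution of the top-order derivatives $u_{(k+1)x},v_{(k+1)x},w_{(k+1)x}$ to $(\alpha_{k+1},\beta_{k+1},\gamma_{k+1})$ comes from the partials of $\alpha_k,\beta_k,\gamma_k$ with respect to $u_{kx},v_{kx},w_{kx}$. A straightforward induction shows that these partials equal $w_x^{-(k-1)}$ times the corresponding partials at level $1$, so one arrives at
\begin{equation*}
\begin{pmatrix}\alpha_{k+1}\\ \beta_{k+1}\\ \gamma_{k+1}\end{pmatrix}
= w_x^{-k}\, J\begin{pmatrix}u_{(k+1)x}\\ v_{(k+1)x}\\ w_{(k+1)x}\end{pmatrix}
+\mathbb{G}_2(u_x,v_x,w_x,\ldots,u_{kx},v_{kx},w_{kx}),
\end{equation*}
with $\mathbb{G}_2$ a $3$-vector of lower-order quantities. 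Since $J$ is non-singular by the base case, this linear system determines $(u_{(k+1)x},v_{(k+1)x},w_{(k+1)x})$ in terms of $(\alpha_1,\beta_1,\gamma_1,\ldots,\alpha_{k+1},\beta_{k+1},\gamma_{k+1})$, and the induction goes through.

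The only real obstacle is verifying the clean matrix structure above: that the coefficient matrix of the top-order derivatives is exactly $w_x^{-k}J$, up to a scalar power of $w_x$, so that invertibility at level $k$ reduces to the single determinant $\det J\neq 0$ already guaranteed by the explicit base-case inversion. Once this is in place, the argument terminates just as in Lemma \ref{lemx2}, and the routine algebra may be suppressed as the authors indicate.
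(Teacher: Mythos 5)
Your proposal is correct and follows essentially the same route as the paper's (omitted) argument: explicit inversion of the level-one relations to recover $(u_x,v_x,w_x)$, followed by solving a block-triangular linear system for the higher derivatives whose top-order coefficient matrix is non-singular. Your observation that this matrix equals $w_x^{-k}J$ with $J$ the level-one Jacobian is exactly the structure of the paper's matrix $M_1$ (one can check $\det J=-6w_x^{-4}\neq 0$ directly), so nothing further is needed.
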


Regarding the generalized Riemann equation at $N=3$ \eqref{N3a}-\eqref{N3c}, suppose that $$F = F(u,v,w,u_{x},v_{x},w_{x},\cdots,u_{nx},v_{nx},w_{nx})$$ is a conserved density such that
\begin{equation}\label{n3con}
\partial_{t}F=\partial_{x}(2uF).
\end{equation}
According to lemmas \ref{lemx3} and \ref{lemx4}, the conservation law \eqref{n3con} could be solved in coordinates $(u,v,w,\alpha_{1},\beta_{1},\gamma_{1},\cdots,\alpha_{k},\beta_{k},\gamma_{k})$.

\begin{prop}
Equation \eqref{n3con} is qualified as a conservation law of the system \eqref{N3a}-\eqref{N3c} if and only if
  \begin{equation*}
    F=w_{x}G\Big(w,v^2-2wu,\alpha_{1},\beta_{1},\gamma_{1} - w^{-1}v,
    \alpha_{2},\beta_{2},\gamma_{2},\cdots,\alpha_{n},\beta_{n},\gamma_{n}\Big).
  \end{equation*}
\end{prop}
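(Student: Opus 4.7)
The plan is to mirror the argument of Proposition~\ref{prop:1}. First, I invoke Lemma~\ref{lemx4} to re-express any candidate conserved density $F$ as $\hat F(u,v,w,\alpha_1,\beta_1,\gamma_1,\ldots,\alpha_n,\beta_n,\gamma_n)$ in the new coordinate system adapted to the dynamics; the invertibility guaranteed by Lemma~\ref{lemx4} ensures no loss of generality.

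Next, I would compute $\partial_t \hat F - \partial_x(2u\hat F)$ directly. Using \eqref{N3a}--\eqref{N3c} for the evolution of $u,v,w$, together with Lemma~\ref{lemx3} for the transport of $\alpha_k,\beta_k,\gamma_k$, all contributions of the form $2u\,\partial_x(\cdot)$ cancel between the two sides, leaving $-2v\hat F_u - 2w \hat F_v - 2\hat F_{\gamma_1} - 2u_x \hat F = 0$. The coefficient $u_x$ is then expressed through the $n=1$ case of the inversion in Lemma~\ref{lemx4}, namely $u_x = 3(\gamma_1^2-\alpha_1)/(\gamma_1^3-3\alpha_1\gamma_1+2\beta_1)$, turning this into a first-order linear PDE in $\hat F$ whose coefficients depend only on $u,v,w,\alpha_1,\beta_1,\gamma_1$ (the remaining $\alpha_k,\beta_k,\gamma_k$ for $k\geq 2$ appear merely as passive parameters).

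Then I would solve this PDE by the method of characteristics. With $w,\alpha_1,\beta_1$ held fixed along a characteristic, the system $du/v = dv/w = d\gamma_1$ produces the two invariants $\gamma_1-w^{-1}v$ (from integrating $dv/w = d\gamma_1$) and $v^2-2wu$ (from combining $du = v\,d\gamma_1$ with $v = w(\gamma_1-c_1)$ along the characteristic). The multiplicative transport equation for $\hat F$ integrates immediately using the identity $\partial_{\gamma_1}(\gamma_1^3-3\alpha_1\gamma_1+2\beta_1)=3(\gamma_1^2-\alpha_1)$, which yields an integrating factor proportional to $(\gamma_1^3-3\alpha_1\gamma_1+2\beta_1)^{-1}$. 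Reverting to the original coordinates via Lemma~\ref{lemx4}, this factor is precisely a constant multiple of $w_x$, and one arrives at the stated normal form. The converse is verified by direct substitution using Lemma~\ref{lemx3} together with the easy check $\partial_t w_x = \partial_x(2uw_x)$.

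The main obstacle I anticipate is the bookkeeping in the direct computation of $\partial_t\hat F-\partial_x(2u\hat F)$: one must confirm that the inhomogeneous pieces originating from $u_t,v_t,\partial_t\gamma_1$ combine exactly into $-2v\hat F_u-2w\hat F_v-2\hat F_{\gamma_1}$ with no leftover, and then recognize the integrating factor as proportional to $w_x$. Once that algebraic reduction is in place, solving the resulting characteristic ODEs is routine.
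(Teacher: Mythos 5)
Your proposal is correct and follows exactly the route the paper intends: the paper omits the detailed $N=3$ proof precisely because it is the same computation as Proposition~\ref{prop:1}, namely passing to the coordinates $(u,v,w,\alpha_k,\beta_k,\gamma_k)$ via Lemmas~\ref{lemx3} and~\ref{lemx4}, reducing the conservation law to the first-order PDE $-2v\hat F_u-2w\hat F_v-2\hat F_{\gamma_1}-2u_x\hat F=0$, and solving it by characteristics to obtain the invariants $w$, $v^2-2wu$, $\gamma_1-w^{-1}v$ and the factor $w_x\propto(\gamma_1^3-3\alpha_1\gamma_1+2\beta_1)^{-1}$. The identities you rely on (the expression for $u_x$ and the integrating factor) check out against the paper's inversion formulas, so no gap remains.
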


In the spirit of proposition \ref{prop:2}, we also find conservation laws of the form
\begin{equation*}
  \partial_{t}\left(w_{x}P\Big)=\partial_{x}\Big(2uw_{x}P+Q\right)_{x},
\end{equation*}
where $Q=Q(\alpha_{1},\beta_{1},\gamma_{1},\cdots,\alpha_{n-1},\beta_{n-1},\gamma_{n-1})$ is an arbitrary smooth function and
\begin{equation*}
 P =-\frac{1}{2}\int\sum_{k=1}^{n-1}\left(\frac{\partial Q}{\partial \alpha_{k}}\alpha_{k+1}+\frac{\partial Q}{\partial \beta_{k}}\beta_{k+1}+\frac{\partial Q}{\partial \gamma_{k}}\gamma_{k+1}\right)d\gamma_{1}.
\end{equation*}

\subsection{$N=4$}
The generalized Riemann equation at $N=4$ is given by
\begin{subequations}
\begin{align}
u_{t}&=2uu_{x}-2v,\label{N4a}\\
v_t&=2uv_{x}-2w,\label{N4b}\\
w_t&=2uw_{x}-2z,\label{N4c}\\
z_t&=2uz_{x},\label{N4d}
\end{align}
\end{subequations}
where $u=-u_1/2$, $v=u_2/4$, $w=-u_3/8$ and $z = u_4/16$. Let
\begin{eqnarray*}
&\alpha_{1}\equiv(w_{x}^2-2v_{x}z_{x})z_{x}^{-2},\quad \beta_{1}\equiv(v_{x}^2-2u_{x}w_{x}+2z_{x})z_{x}^{-2},\label{N4v1}\\ &\gamma_{1}\equiv(w_{x}^3-3z_{x}v_{x}w_{x}+3z_{x}^2u_{x})z_{x}^{-3},
\quad \delta_{1}\equiv w_{x}z_{x}^{-1},\label{N4v2}
\end{eqnarray*}
and
\begin{eqnarray*}
& \alpha_{k+1}\equiv (z_{x}^{-1}\partial_{x})^{k}\alpha_{1},\qquad \beta_{k+1}\equiv(z_{x}^{-1}\partial_{x})^{k}\beta_{1},\label{N4v3} \\ \gamma_{k+1}
& \equiv(z_{x}^{-1}\partial_{x})^{k}\gamma_{1},\qquad \delta_{k+1}\equiv(z_{x}^{-1}\partial_{x})^{k}\delta_{1}\quad (k\geq 1)\label{N4v4}.
\end{eqnarray*}
then in terms of these new dynamical variables, we have
\begin{lem}\label{lemx5}
When the system \eqref{N4a}-\eqref{N4d} holds, $\partial_{t}\alpha_{k}=2u(\partial_{x}\alpha_{k})$, $\partial_{t}\beta_{k}=2u(\partial_{x}\beta_{k})$, $\partial_{t}\gamma_{k}=2u(\partial_{x}\gamma_{k})(k\geq1)$,
$\partial_{t}\delta_{1}=2u(\partial_{x}\delta_{1})-2$ and $\partial_{t}\delta_{k}=2u(\partial_{x}\delta_{k})(k\geq2)$.
\end{lem}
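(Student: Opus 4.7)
My plan is to mirror the proofs of Lemmas \ref{lemx1} and \ref{lemx3}, splitting the argument into a direct verification of the $k=1$ base case for each of $\alpha$, $\beta$, $\gamma$, $\delta$, followed by an induction on $k$ that exploits the common recursion $(\,\cdot\,)_{k+1}=(z_x^{-1}\partial_x)(\,\cdot\,)_k$. The only input from \eqref{N4a}-\eqref{N4d} I need is the $x$-prolongation
\begin{align*}
u_{xt} &= 2u_x^2 + 2uu_{xx} - 2v_x, & v_{xt} &= 2u_x v_x + 2u v_{xx} - 2w_x, \\
w_{xt} &= 2u_x w_x + 2u w_{xx} - 2z_x, & z_{xt} &= 2u_x z_x + 2u z_{xx}.
\end{align*}
In particular the last identity is the conservation law $z_{xt} = \partial_x(2uz_x)$ that underlies the reciprocal recursion.

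For the base case I would begin with $\delta_1 = w_x z_x^{-1}$, whose calculation is shortest: the inhomogeneous term $-2z_x$ in $w_{xt}$ contributes $-2$ after division by $z_x$, while all remaining terms assemble exactly into $2u\partial_x \delta_1$. For $\alpha_1$, $\beta_1$ and $\gamma_1$ the analogous computations are longer but entirely mechanical; the decisive point is that the particular polynomial combinations of $u_x,v_x,w_x,z_x$ appearing in the definitions have been chosen precisely so that the contributions coming from the sources $-2v_x,-2w_x,-2z_x$ cancel identically, leaving only the transport piece $2u\partial_x(\,\cdot\,)$.

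For the inductive step I would handle all four families uniformly. Assuming $\partial_t \eta_k = 2u\partial_x \eta_k + c_k$ with $c_k$ a constant, and setting $\eta_{k+1} = z_x^{-1}\partial_x \eta_k$, a direct computation using $z_{xt}=2u_x z_x+2u z_{xx}$ yields
\begin{align*}
\partial_t \eta_{k+1}
&= -z_x^{-2} z_{xt}\,\partial_x \eta_k + z_x^{-1}\partial_x\bigl(2u\partial_x \eta_k + c_k\bigr) \\
&= -z_x^{-2}(2u_x z_x + 2u z_{xx})\,\partial_x \eta_k + z_x^{-1}\bigl(2u_x\partial_x \eta_k + 2u\partial_x^2 \eta_k\bigr) \\
&= 2u\bigl(z_x^{-1}\partial_x^2 \eta_k - z_x^{-2} z_{xx}\,\partial_x \eta_k\bigr) = 2u\,\partial_x \eta_{k+1}.
\end{align*}
The constant $c_k$ is killed by $\partial_x$ and the two $2u_x z_x^{-1}\partial_x \eta_k$ terms cancel, so the source present at $k=1$ for $\delta$ is absent for $k\ge 2$ and the homogeneous relation propagates.

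The only genuine obstacle is the $k=1$ verification for $\alpha_1$, $\beta_1$, $\gamma_1$: one must expand each time derivative and check by brute algebra that every term proportional to a source $-2v_x$, $-2w_x$ or $-2z_x$ cancels against its counterpart. This is the same phenomenon already exhibited at $N=2$ and $N=3$, and reflects the choice of these objects as Riemann-type invariants adapted to the conservation law $z_{xt}=\partial_x(2uz_x)$. Once these cancellations are confirmed, the uniform inductive step above completes the proof for every $k$ and every family.
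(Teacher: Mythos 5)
Your proposal is correct and follows exactly the strategy the paper itself indicates (it omits the detailed proof of Lemma \ref{lemx5}, stating that it is verified as in Section II): a direct computation of the $k=1$ cases using the $x$-differentiated system, followed by the uniform induction $\partial_t\eta_{k+1}=2u\,\partial_x\eta_{k+1}$ driven by the conservation law $z_{xt}=\partial_x(2uz_x)$, which is precisely the argument of Lemma \ref{lemx1} with $v_x$ replaced by $z_x$. No gaps; the base-case cancellations you defer to ``brute algebra'' do check out.
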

\begin{lem}\label{lemx6}
For any $n\in\mathbb{Z}_{+}$, the transformation
\begin{equation*}
\Gamma_{3}:(u_{x},v_{x},w_{x},z_{x},\cdots,u_{nx},v_{nx},w_{nx},z_{nx})\mapsto (\alpha_{1},\beta_{1},\gamma_{1},\delta_{1},\cdots,\alpha_{n},\beta_{n},\gamma_{n},\delta_{n})
\end{equation*}
is invertible.
\end{lem}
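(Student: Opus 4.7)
The plan is to emulate the inductive structure used to establish Lemmas \ref{lemx2} and \ref{lemx4}. I would first invert the $n=1$ layer by solving the algebraic system $(\alpha_1,\beta_1,\gamma_1,\delta_1) \leftrightarrow (u_x,v_x,w_x,z_x)$ explicitly, and then extract $(u_{(k+1)x},v_{(k+1)x},w_{(k+1)x},z_{(k+1)x})$ from the quadruple $(\alpha_{k+1},\beta_{k+1},\gamma_{k+1},\delta_{k+1})$ by solving a linear system whose coefficient matrix is, up to a nonzero scalar, the Jacobian of the base step.

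For the base case, a natural order of elimination is $\delta_1 \to \alpha_1 \to \gamma_1 \to \beta_1$. From $\delta_1 = w_x z_x^{-1}$ I recover $w_x = \delta_1 z_x$; substituting into $\alpha_1 = (w_x^2 - 2 v_x z_x)z_x^{-2}$ gives $v_x = (\delta_1^2 - \alpha_1) z_x / 2$; feeding both into $\gamma_1 = (w_x^3 - 3 z_x v_x w_x + 3 z_x^2 u_x)z_x^{-3}$ produces $u_x = (2\gamma_1 + \delta_1^3 - 3\alpha_1 \delta_1) z_x / 6$; and plugging all of these into $\beta_1 = (v_x^2 - 2 u_x w_x + 2 z_x) z_x^{-2}$ leaves a linear equation for $z_x^{-1}$, namely
\[
\frac{2}{z_x} = \beta_1 - \frac{(\delta_1^2-\alpha_1)^2}{4} + \frac{\delta_1(2\gamma_1 + \delta_1^3 - 3\alpha_1 \delta_1)}{3},
\]
which determines $z_x$, and hence $u_x,v_x,w_x$, as rational expressions in $(\alpha_1,\beta_1,\gamma_1,\delta_1)$.

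The inductive step mirrors the proof of Lemma \ref{lemx2}. Using the chain rule and the recursions $\alpha_{k+1} = z_x^{-1}\partial_x \alpha_k$, $\beta_{k+1} = z_x^{-1}\partial_x \beta_k$, $\gamma_{k+1} = z_x^{-1}\partial_x \gamma_k$, $\delta_{k+1} = z_x^{-1}\partial_x \delta_k$, induction on $k$ shows that the highest-order jets enter linearly:
\[
\begin{pmatrix} \alpha_{k+1}\\ \beta_{k+1}\\ \gamma_{k+1}\\ \delta_{k+1} \end{pmatrix}
= z_x^{-k}\, J_1 \begin{pmatrix} u_{(k+1)x}\\ v_{(k+1)x}\\ w_{(k+1)x}\\ z_{(k+1)x} \end{pmatrix}
+ \mathbb{G}_3\bigl(u_x,v_x,w_x,z_x,\ldots,u_{kx},v_{kx},w_{kx},z_{kx}\bigr),
\]
where $J_1$ is the Jacobian of $(\alpha_1,\beta_1,\gamma_1,\delta_1)$ with respect to $(u_x,v_x,w_x,z_x)$ and $\mathbb{G}_3$ is a vector-valued function of lower-order jets. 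Since the explicit inversion carried out in the base case exhibits a smooth inverse of the level-one map, $\det J_1 \neq 0$ generically in the jet space, and the coefficient matrix $z_x^{-k}J_1$ is nonsingular. One therefore solves for $(u_{(k+1)x},v_{(k+1)x},w_{(k+1)x},z_{(k+1)x})$ inductively in terms of $(\alpha_1,\beta_1,\gamma_1,\delta_1,\ldots,\alpha_{k+1},\beta_{k+1},\gamma_{k+1},\delta_{k+1})$.

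The main obstacle is the base case: the defining relations for $\alpha_1,\beta_1,\gamma_1$ are nonlinear (cubic in the jet variables), so the order of elimination must be chosen carefully to keep every intermediate step rational. Once $J_1$ is known to be invertible, the inductive step is a routine $4\times 4$ linear-algebra argument identical in structure to that of Lemmas \ref{lemx2} and \ref{lemx4}, just one dimension larger.
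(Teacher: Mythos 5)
Your proposal is correct and follows essentially the same route as the paper: an explicit rational inversion of the level-one relations (your elimination order $\delta_1\to\alpha_1\to\gamma_1\to\beta_1$ reproduces exactly the paper's formulas $z_x=24/E$ with $E=\delta_1^4-6\alpha_1\delta_1^2-3\alpha_1^2+8\gamma_1\delta_1+12\beta_1$, etc.), followed by the observation that the top-order jets enter each quadruple $(\alpha_{k+1},\beta_{k+1},\gamma_{k+1},\delta_{k+1})$ linearly through the non-singular matrix $z_x^{-k}J_1$, which coincides with the paper's coefficient matrix $M_2$. Your inverse-function-theorem justification of $\det J_1\neq 0$ (valid wherever $z_x\neq 0$, which is assumed throughout) is a slightly cleaner substitute for the paper's bare assertion of non-singularity.
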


On the basis of above lemmas, the following results are proved.
\begin{prop}
$F = F(u,v,w,z,\cdots,u_{nx},v_{nx},w_{nx},z_{nx})$ is a conserved density such that $\partial_{t}F=\partial_{x}(2uF)$ if and only if
\begin{equation*}
  F=z_{x}G\left(z,w^2-2vz,\frac{3uz^2-3vwz+w^3}{z^2},\alpha_{1},\beta_{1},\gamma_{1},\delta_{1}-\frac{w}{z},
\alpha_{2},\beta_{2},\gamma_{2},\delta_{2},\cdots,\alpha_{n},\beta_{n},\gamma_{n},\delta_{n}\right).
\end{equation*}
\end{prop}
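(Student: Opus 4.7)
The plan is to mimic the three-step strategy used for $N=2$ in the proof of Proposition~\ref{prop:1}. First, I would invoke Lemma~\ref{lemx6} to re-express any conserved density $F$ in the new coordinates, obtaining $\hat F(u,v,w,z,\alpha_1,\beta_1,\gamma_1,\delta_1,\ldots,\alpha_n,\beta_n,\gamma_n,\delta_n)$. By Lemma~\ref{lemx5}, the variables $\alpha_k,\beta_k,\gamma_k$ (all $k\ge 1$) and $\delta_k$ ($k\ge 2$) evolve by pure transport $\partial_tX=2u\partial_xX$, so their partial derivatives drop out of the combination $\partial_t\hat F-2u\partial_x\hat F$, leaving only the contributions from $u,v,w,\delta_1$:
\begin{equation*}
\partial_t\hat F-2u\partial_x\hat F = -2v\,\hat F_u - 2w\,\hat F_v - 2z\,\hat F_w - 2\,\hat F_{\delta_1}.
\end{equation*}
Consequently the conservation law $\partial_tF=\partial_x(2uF)$ is equivalent to the linear first-order PDE
\begin{equation*}
v\,\hat F_u+w\,\hat F_v+z\,\hat F_w+\hat F_{\delta_1}+u_x\,\hat F=0,
\end{equation*}
in which $u_x$ is regarded as a function of $(\alpha_1,\beta_1,\gamma_1,\delta_1)$ determined by Lemma~\ref{lemx6}.

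Second, I would solve this PDE by the method of characteristics. Treating $(\alpha_1,\beta_1,\gamma_1)$ as parameters, the characteristic system
\begin{equation*}
\frac{du}{ds}=v,\quad \frac{dv}{ds}=w,\quad \frac{dw}{ds}=z,\quad \frac{dz}{ds}=0,\quad \frac{d\delta_1}{ds}=1
\end{equation*}
integrates trivially into successive polynomials in $s$, and a short computation shows that the four quantities
\begin{equation*}
z,\qquad w^2-2vz,\qquad \frac{3uz^2-3vwz+w^3}{z^2},\qquad \delta_1-\frac{w}{z}
\end{equation*}
are independent first integrals; together with the passive invariants $\alpha_k,\beta_k,\gamma_k$ ($k\ge 1$) and $\delta_k$ ($k\ge 2$) they form a complete set of characteristic invariants.

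Third, the multiplicative term $u_x\hat F$ is accommodated by the key algebraic identity which follows from the inversion formulas behind Lemma~\ref{lemx6}:
\begin{equation*}
u_x=\frac{4(\delta_1^3-3\alpha_1\delta_1+2\gamma_1)}{E}=\frac{\partial}{\partial\delta_1}\ln E,\qquad z_x=\frac{24}{E},
\end{equation*}
where $E:=\delta_1^4-6\alpha_1\delta_1^2-3\alpha_1^2+8\gamma_1\delta_1+12\beta_1$. Integrating along a characteristic parametrised by $\delta_1$ (with $\alpha_1,\beta_1,\gamma_1$ held fixed) then gives $\hat F\propto E^{-1}G=\tfrac{1}{24}z_x\,G$, which up to absorbing the constant into $G$ is the claimed form. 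The converse direction, that any $F=z_xG(\cdots)$ of the listed arguments satisfies $\partial_tF=\partial_x(2uF)$, is immediate from Lemma~\ref{lemx5}.

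The main obstacle will be the third invariant $(3uz^2-3vwz+w^3)/z^2$: because the cascade $dw/ds=z\Rightarrow dv/ds=w\Rightarrow du/ds=v$ builds $u$ up to a cubic in $s$, producing a $\delta_1$-free combination of $u,v,w,z$ requires pinning down the precise cubic that cancels all the $s$-dependence. The companion identity $u_x=\partial_{\delta_1}\ln E$ is the other delicate point: without this essentially algebraic coincidence the integrating factor would not collapse to the clean expression $z_x$ demanded by the proposition.
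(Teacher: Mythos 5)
Your proposal is correct and follows exactly the route the paper prescribes (it omits the $N=4$ proof, saying only that it is verified ``as we did in section II''): reduce via Lemmas~\ref{lemx5} and~\ref{lemx6} to the linear PDE $v\hat F_u+w\hat F_v+z\hat F_w+\hat F_{\delta_1}+u_x\hat F=0$ and integrate along characteristics. Your key identities $u_x=\partial_{\delta_1}\ln E$ and $z_x=24/E$ with $E=\delta_1^4-6\alpha_1\delta_1^2-3\alpha_1^2+8\gamma_1\delta_1+12\beta_1$ agree with the inversion formulas in the paper's (commented-out) proof of Lemma~\ref{lemx6}, and your four first integrals and the invariant count $4n+3$ check out.
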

\begin{prop}
For any smooth function $Q = Q(\alpha_{1},\beta_{1},\gamma_{1},\delta_{1},\cdots,\alpha_{n-1},\beta_{n-1},\gamma_{n-1},\delta_{n-1})$,
\begin{equation*}
  \partial_{t}\left(z_{x}P\Big)=\partial_{x}\Big(2uz_{x}P+Q\right),
\end{equation*}
is a conservation law of the system \eqref{N4a}-\eqref{N4d} if
\begin{equation*}
 P = -\frac{1}{2}\int\sum_{k=1}^{n-1}\left(\frac{\partial Q}{\partial \alpha_{k}}\alpha_{k+1}+\frac{\partial Q}{\partial \beta_{k}}\beta_{k+1}+\frac{\partial Q}{\partial \gamma_{k}}\gamma_{k+1}+\frac{\partial Q}{\partial \delta_{k}}\delta_{k+1}\right)d\delta_{1}.
 \end{equation*}
\end{prop}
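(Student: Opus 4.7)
The plan is to imitate the proof of Proposition \ref{prop:2} verbatim, now using Lemma \ref{lemx5} in place of Lemma \ref{lemx1} to handle the larger coordinate system $(\alpha_k,\beta_k,\gamma_k,\delta_k)$.

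First I would record the trivial conservation law $\partial_t z_x = \partial_x(2u z_x)$ obtained by differentiating \eqref{N4d} in $x$. Then, for an arbitrary smooth function $P = P(\alpha_1,\beta_1,\gamma_1,\delta_1,\dots,\alpha_n,\beta_n,\gamma_n,\delta_n)$, Lemma \ref{lemx5} tells us that the only inhomogeneous contribution to $\partial_t P$ arises from the $-2$ in $\partial_t\delta_1$; combining this with the evolution of $z_x$ gives
\begin{equation*}
\partial_t(z_x P) \;=\; \partial_x\bigl(2u z_x P\bigr) \;-\; 2 z_x\,\frac{\partial P}{\partial \delta_1}.
\end{equation*}
Consequently the target identity $\partial_t(z_x P) = \partial_x(2u z_x P + Q)$ holds if and only if $\partial_x Q = -2 z_x\,\partial P/\partial \delta_1$.

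Second, I would exploit that $Q$ depends only on the new variables at levels $k\le n-1$ together with the recursions $\partial_x\alpha_k = z_x\alpha_{k+1}$, $\partial_x\beta_k = z_x\beta_{k+1}$, $\partial_x\gamma_k = z_x\gamma_{k+1}$, $\partial_x\delta_k = z_x\delta_{k+1}$, which follow directly from the definitions. The chain rule then gives
\begin{equation*}
\partial_x Q \;=\; z_x\sum_{k=1}^{n-1}\Bigl(Q_{\alpha_k}\alpha_{k+1}+Q_{\beta_k}\beta_{k+1}+Q_{\gamma_k}\gamma_{k+1}+Q_{\delta_k}\delta_{k+1}\Bigr),
\end{equation*}
and the required identity reduces to a first-order ODE for $P$ in the single variable $\delta_1$, whose solution by quadrature is precisely the formula stated in the proposition. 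Lemma \ref{lemx6} ensures that $(\alpha_1,\beta_1,\gamma_1,\delta_1,\dots,\alpha_n,\beta_n,\gamma_n,\delta_n)$ are functionally independent, so treating the other coordinates as inert parameters during the $\delta_1$-integration is legitimate.

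The main obstacle is really bookkeeping rather than conceptual: one must trust Lemmas \ref{lemx5} and \ref{lemx6}, which are themselves routine (if lengthy) extensions of the $N=2$ arguments. A secondary point worth noting is that the antiderivative in $\delta_1$ is unique only up to a $\delta_1$-independent function of the remaining new variables; when multiplied by $z_x$ such a function is a conserved density of the kind already classified by the preceding proposition, so it may be absorbed without any loss of generality.
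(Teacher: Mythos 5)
Your proposal is correct and follows exactly the route the paper intends: the paper omits the detailed proof of this proposition, stating that it is verified as in Section II, and your argument is precisely the Proposition \ref{prop:2} computation transplanted to the $N=4$ variables via Lemma \ref{lemx5}, the trivial conservation law $\partial_t z_x=\partial_x(2uz_x)$, and the recursions $\partial_x(\cdot)_k=z_x(\cdot)_{k+1}$. Your closing remark that the $\delta_1$-independent integration ``constant'' yields a density already covered by the preceding proposition matches the paper's own observation in the $N=2$ example.
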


\section{Conclusion}
It is well known that the existence of infinitely many conserved densities dependent on arbitrary function  implies strong integrability for a system. In this paper, by introducing special variables we obtain infinitely many conserved densities (two categories) for the generalized Riemann equations at $N=2,3$ and 4. Through the changes of variables or reductions, conserved densities for related systems are obtained. It is not difficult to find that the introduced variables such as $\alpha_{k},\beta_{k},\gamma_{k},\delta_{k}$ whose evolutions are simple all  take rational form and  could be endowed with degree. Thus it may be possible to consider the generalized Riemann equation for the general $N$.  However in such case we do not have ready results.
\section*{acknowledgments}
This paper is supported by the National Natural Science Foundation of China (grant numbers: 11271366, 11331008 and 11505284) and the Foundamental Research Funds for Central Universities.


\end{document}